\documentclass[nolineno,a4paper,UKenglish,cleveref, autoref, thm-restate]{socg-lipics-v2021}

\hideLIPIcs

\usepackage{standalone}
\usepackage{algorithm}
\usepackage{algpseudocode}
\usepackage{amsmath}
\usepackage{amsthm}
\usepackage{booktabs}
\usepackage{url}
\usepackage{tikz}
\usetikzlibrary{patterns}
\usetikzlibrary{positioning}
\usetikzlibrary{decorations.pathreplacing}

\algrenewcommand\alglinenumber[1]{\footnotesize #1}
\def\blockat{\textit{blockat}}
\def\key{\textit{key}}

\title{Parallel $\mathcal O(\sqrt{n})$ Overhead LSD Radix Sort}
\author{Robert Clausecker}{Zuse Institute Berlin, Germany}{clausecker@zib.de}{}{NHR-Verein Graduate School programme}
\author{Florian Schintke}{Zuse Institute Berlin, Germany}{schintke@zib.de}{https://orcid.org/0000-0003-4548-788X}{}
\authorrunning{R. Clausecker and F. Schintke}
\Copyright{Robert Clausecker and Florian Schintke} 

\ccsdesc[500]{Theory of computation~Sorting and searching}

\keywords{cache locality, radix sort, sorting} 

\supplement{https://github.com/clausecker/radsort}

\acknowledgements{This work was produced during a secondment of the first author at the
Sanders workgroup at KIT.  The author would like to thank Marvin Williams for his invaluable
input.}

\EventEditors{John Q. Open and Joan R. Access}
\EventNoEds{2}
\EventLongTitle{42nd Conference on Very Important Topics (CVIT 2016)}
\EventShortTitle{CVIT 2016}
\EventAcronym{CVIT}
\EventYear{2016}
\EventDate{December 24--27, 2016}
\EventLocation{Little Whinging, United Kingdom}
\EventLogo{}
\SeriesVolume{42}
\ArticleNo{23}

\begin{document}

\maketitle

\begin{abstract}
We present
Radsort, a variant of LSD radix sort, sorting data with $\mathcal O(\sqrt{n})$~additional space.
Radsort is stable, admits a simple implementation and is easy to parallelise.
For arrays exceeding a size of around~$2\,\mathrm{MiB}$ it outperforms a conventional
out-of-place LSD radix sort.
\end{abstract}

\section{Introduction}
An LSD (least significant digit) radix sort algorithm sorts an array~$A$ of
length~$n$ holding $n_t$-tuples of keys $\bigl(\key(A[i],0),\key(A[i],1),\ldots,\key(A[i],n_t-1)\bigr)$
of alphabet~$\Sigma$ with $\sigma$~symbols into lexicographic order
by repeatedly sorting according to $\key(A[i],n_t-1)$,
then~$\key(A[i],n_t-2)$, and so on until~$\key(A[i],0)$.
With a stable sort, identical keys end up sorted by their suffixes,
giving a lexicographic order.
A conventional LSD radix sort sorts by~$\key(A[i],t)$ by first taking a histogram~$H$
of how often each $c\in\Sigma$ occurs.  A prefix sum over~$H$ obtains
the bucket starts~$S$, with which we sort~$A$ into a new array~$A'$
(Alg.~\ref{alg:ooplsd}).

\begin{algorithm}
\caption{out-of-place LSD radix sort round}\label{alg:ooplsd}
\begin{algorithmic}[1]
\Procedure{radixSortStep}{$A',A,t$}\Comment{sort array~$A$ by key~$t$ into~$A'$}
\State $H\gets 0\ldots 0$
\For{$i\gets 0\ldots n-1$}\Comment{take a histogram of $\key(A[i],t)$}
 \State$H[\key(A[i],t)]\gets H[\key(A[i],t)]+1$
\EndFor
\State $S[0] \gets 0$
\For{$i\gets 1\ldots\sigma-1$}
 \State$S[i]\gets S[i-1]+H[i-1]$\Comment{prefix sum over $H$}
\EndFor
\For{$i\gets 0\ldots n-1$}
 \State$c\gets\key(A[i],t)$\Comment{determine bucket}
 \State$A'\bigl[S[c]\bigr]\gets A[i]$\Comment{sort $A[i]$ into $A'$}
 \State$S[c]\gets S[c]+1$\Comment{advance bucket~$c$ past new element}
\EndFor
\EndProcedure
\end{algorithmic}
\end{algorithm}

\goodbreak

With the input array~$A$ being consumed, output is produced in a separate array~$A'$.
While only $n$~elements
are occupied in both arrays together at any point in the algorithm, nevertheless
$2n$~elements of storage, i.\,e.~$n$~elements of overhead in addition to the input,
need to be provided.
We would like to address this shortcoming with a novel LSD radix sort algorithm
that does the job with only $\mathcal O(\sqrt n)$ extra space at a similar performance.
We call this algorithm \emph{Radsort} due to it being
a \textbf{rad}ical overhead \textbf{rad}ix sort.

\begin{algorithm}
\caption{the Radsort algorithm} \label{alg:fullsort}
\begin{algorithmic}[1]
\Procedure{radSort}{$A,n,n_t$}
\State$\Call{setup}{A,n}$\Comment{Alg.~\ref{alg:setup}}
\For{$t\gets n_t-1\ldots0$}
 \State$\Call{sortPhase}{t}$\Comment{Alg.~\ref{alg:sortphase}}
 \State$\Call{fixupPhase}{}$\Comment{Alg.~\ref{alg:fixupphase}}
\EndFor
\State$\Call{Finalize}{}$\Comment{Alg.~\ref{alg:finalize}}
\EndProcedure
\end{algorithmic}
\end{algorithm}

\section{Radsort} \label{sec:radsort}
Radsort (Alg.~\ref{alg:fullsort})
treats the input array~$A$ as a sequence of \emph{blocks} of some block size~$b$%
\footnote{see \S~\ref{sec:analysis} for discussion on the selection of~$b$}.
After a block of input is consumed, its storage is reused for subsequently
produced output, reducing the storage overhead to a constant number of
\emph{scratch blocks}, as well as some bookkeeping.

Each round of sorting comprises two \emph{phases}.
During the \emph{sort phase}, one output block is initially assigned to each of the
$\sigma$~\emph{buckets} we sort into.  Once such a block is full, a new block
is drawn from previously consumed input blocks, overwriting the input as it is processed.
The result of the sort phase are $\sigma$~randomly interleaved sequences of output blocks,
each forming one of the output buckets.
In the \emph{fixup phase} following each \emph{sort phase}, these sequences are
deinterleaved, giving the data block in order of buckets.
The deinterleaving does not move data blocks around---instead, a permutation~$\pi$
tracks the order in which the data blocks need to be processed for the data to appear
in order.

Like in a conventional LSD radix sort, these two phases are repeated for each key
position in turn.
After all~$n_t$ rounds of sorting, a~\emph{finalisation step} shuffles the data
blocks around, giving a sorted result in~$A$.  While Radsort allocates space in a
more complicated way than a traditional LSD~radix sort, it works according to the
same principles, giving the same stability guarantees at lower memory overhead and
better cache locality.

\subsection{Data Structures} \label{sec:datastructures}
In addition to the $\lfloor n/b\rfloor$~blocks of~$A$, $2\sigma$~\emph{scratch
blocks} stored in the temporary array~$T$ are required:
$\sigma$~blocks of \emph{head start} to cover the initial
output blocks for the buckets of the current round, and $\sigma$~additional
covering the partial blocks of the previous round of sorting.
The algorithm treats the concatenation of $T$ and~$A$ as one long array of
$n_\pi$~blocks of $b$~elements each, indexed through the $\blockat$~function.
The first $2\sigma$~indices refer to blocks in~$T$, the other indices to blocks overlaying~$A$.
The $\blockat(i)$ function returns a pointer to the beginning of the block at index~$i$.
\begin{equation}
\blockat(i)=
\begin{cases}
\mbox{address of $T[ib]$}&\mbox{if $0\le i<2\sigma$}\\
\mbox{address of $A[(i-2\sigma)b]$}&\mbox{if $2\sigma\le i<n_\pi$}
\end{cases}
\end{equation}
The $n\bmod b$ elements at the end of~$A$ are copied into a scratch block at
the start of the algorithm.  They are unused until the contents of~$T$
are copied back to~$A$ at the end of the algorithm.
The array~$\pi$ describes a permutation of these $n_\pi$~blocks, such that
$\pi[i]$~holds the index of the block at \emph{logical index}~$i$.
Fig.~\ref{fig:datalayout} shows an example of this data structure for strings of
$n_t=4$~characters on initialisation.

Each block is either \emph{allocated} or \emph{unallocated}.
An allocated block is either
\emph{full}, meaning that all of its elements hold data, or \emph{partial}, meaning that
some of its elements hold data.
Unallocated blocks do not hold any data and are ready for reuse.
At the first~$\sigma$ logical indices, a \emph{head start} of $\sigma$~unallocated blocks
is found.
They are followed by allocated blocks at logical indices~$\sigma\le i<f$,
and finally unallocated blocks are logical indices~$f\le i<n_\pi$.

Array~$P$ tracks up to $\sigma$~partial blocks.
It holds in order of logical index partial blocks~$(i,l)$
described by logical index~$i$ and block length $0\le l<b$.
Of each such partial block, the
first $l$~elements are used, and the other $b-l$~elements are unused.
To simplify the implementation, the block at logical index $f-1$ is always a partial block.

\begin{algorithm}
\caption{finding the size of the next block} \label{alg:blocksize}
\begin{algorithmic}[1]
\Function{sizeOfNextBlock}{$i,i_P$}
\State$(i_{i_P},l_{i_P})\gets P[i_P]$
\If{$\pi[i]=i_{i_P}$}\Comment{is $\pi[i]$ a partial block?}
 \State$i_P\gets i_P+1$
 \State\textbf{return} $l_{i_P}$
\Else\ \textbf{return} $b$
\EndIf
\EndFunction
\end{algorithmic}
\end{algorithm}

Concatenated in the order given by~$\pi$, the contents of full and partial
blocks hold the elements to be sorted in the order induced by the
rounds of sorting performed so far.
We can iterate over the allocated blocks in logical order, discovering the
length of each block by traversing~$P$ at the same time.  This idea is wrapped in the
function $\Call{sizeOfNextBlock}{i,i_P}$ which gives the length of the block at logical index~$i$,
assuming the next partial block is tracked in~$P[i_P]$.  If the block is partial, $i_P$~is
updated to track the next partial block.

In summary, the state of the Radsort algorithm is tracked in the
variables given below.
Additional variables are needed within individual algorithm
steps and are explained there.

\vspace{\abovedisplayskip}
\begin{centering}
\begin{tabular*}{\textwidth}{rcl}
\toprule
$A$&input array&array of strings\\
$n$&input length&$n=|A|$\\
$\sigma$&alphabet size&$\sigma=|\Sigma|$\\
$n_t$&key length&integer\\
$b$&block size&integer\\
$n_\pi$&block count&$n_\pi=\lfloor n/b\rfloor +2\sigma$\\
$T$&scratch buffer&array of $2\sigma b$~strings\\
$f$&fill level&integer\\
$\pi$&permutation&array of $n_\pi$ integers\\
$P$&partial blocks&array of $\sigma$ pairs of integers\\
\bottomrule
\end{tabular*}
\end{centering}

\begin{figure}[p]
\begin{centering}
\begin{tikzpicture}



  \node[draw,minimum width=3cm,anchor=west] at (0,2) (T) {\phantom{A}};
  \node[node distance=-.4pt, left=of T] {$T$};
  \path[draw,decorate,decoration={brace,raise=2pt,amplitude=5pt}]
    (T.north west) -- (T.north east) node[midway,above=5pt]{$2\sigma$};
  \node at (1.75,2) {$\cdots$};
  
  \node[draw,minimum width=7.2cm,anchor=west,node distance=-.4pt,right=of T.east,pattern=north east lines, pattern color=black] (A) {\phantom{A}};
  \node[node distance=-.4pt, right=of A] (Alabel){$A$};
  \path[draw,decorate,decoration={brace,raise=2pt,amplitude=5pt}]
    (A.north west) -- ([xshift=-2mm]A.north east) node[midway,above=5pt]  (Anb) {$\lfloor n/b\rfloor$};
  \node[fill=white,inner sep=1pt] at (6.5,2) {$\cdots$};

  \node[text width=2cm,align=center,font=\small,xshift=3mm,yshift=1mm] at (Alabel |- Anb) {$n\bmod b$\\ tail elements};
  \draw ([xshift=-1mm,yshift=2pt]A.north east) to[out=60,in=270] +(.3,.15);

  \node[draw,minimum width=6.6cm,anchor=west,node distance=-.4pt] at (1,0) (pi) {\phantom{A}};
  \node[node distance=-.4pt, left=of pi] {$\pi$};
  \path[draw,decorate,decoration={brace,raise=11pt,amplitude=5pt,mirror}]
  (pi.south west) -- +(1.6,0) node[midway,below=15pt,text width=2cm,align=center] {$\vphantom\le\sigma$\\ head start};
  \path[draw,decorate,decoration={brace,raise=11pt,amplitude=5pt}]
  (pi.south east) -- +(-2,0) node[midway,below=15pt,text width=2cm,align=center] (f) {$\le\sigma$\\ unallocated};

  \node[draw,fill=black,minimum width=3.0cm,anchor=west,node distance=-.4pt,pattern=north east lines, pattern color=black] at (2.6,0) (data) {\phantom{A}};
  \path[draw,decorate,decoration={brace,raise=11pt,amplitude=5pt}]
  (data.south east) -- (data.south west) node[midway,below=15pt,text width=2cm,align=center] (f) {$\vphantom\le$\\ data entries};

 \node[below=-2pt,font=\small,text height=1.5ex,text depth=.25ex] at (1.1,0 |- pi.south) (lindex) {0};
 \node[below=-2pt,font=\small,text height=1.5ex,text depth=.25ex] at (1.3,0 |- pi.south) {1};
 \node[below=-2pt,font=\small,text height=1.5ex,text depth=.25ex] at (2.7,0 |- pi.south) {$\sigma$};
 \node[below=-2pt,font=\small,text height=1.5ex,text depth=.25ex] at (5.7,0 |- pi.south) {$f$};
 \node[below=-2pt,font=\small,text height=1.5ex,text depth=.25ex] at (7.8,0 |- pi.south) {$n_\pi-1$};
  \node[node distance=-3.5pt, left=of lindex,font=\small,text height=1.5ex,text depth=.25ex] {logical index};
  
  \node[draw,minimum width=37mm,anchor=west,node distance=-.4pt] at (1.7,3.5) (sampleblock) {\phantom{A}};
  \node[node distance=-.4pt, left=of sampleblock] {a block with $b$ elements};
  \node[node distance=-.4pt, right=of sampleblock] {of $n_t=4$ characters each};
  \node[anchor=west,text depth=.25ex,text height=1.5ex] at (1.70,3.5) {\texttt{frob}\texttt{~\;quux}};
  \node[text depth=.25ex,text height=1.5ex] at (4.1,3.4) {$\cdots$};
  \node[anchor=west,text depth=.25ex,text height=1.5ex] at (4.35,3.5) {\texttt{buzz}};
\foreach \i in {2.75,3.72,4.4}
\draw (\i,0 |- sampleblock.south) -- (\i,1 |- sampleblock.north);
\draw[dashed,black] (sampleblock.south west) to[out=300,in=90] (3.5,2 |- A.north);
\draw[dashed,black] (sampleblock.south east) to[out=240,in=90] (4,2 |- A.north);

\foreach \i in {0.5,1.0,2.5,3.0,...,4.0,9.5,10}
\draw (\i,0 |- T.south) -- (\i,1 |- T.north);

\foreach \i in {1.2,1.4,2.4,2.6,...,3.2,5.4,5.6,...,6.2,7.4}
   \draw (\i,0 |- pi.south) -- (\i,1 |- pi.north);

\draw[thick,>=stealth,->] (1.1,0) to[out=90,in=270] (.3,2 |- A.south);
\draw[thick,>=stealth,->] (1.3,0) to[out=90,in=270] (.75,2 |- A.south);
\node at (1.9,0) {$\cdots$};
\draw[thick,>=stealth,->] (2.5,0) to[out=90,in=270] (1.5,2 |- A.south);

\draw[thick,>=stealth,->] (2.7,0) to[out=90,in=270] (3.3,2 |- A.south);
\draw[thick,>=stealth,->] (2.9,0) to[out=90,in=270] (3.75,2 |- A.south);
\node[fill=white,inner sep=1pt] at (4.1,0) {$\cdots$};
\draw[thick,>=stealth,->,looseness=.6] (5.5,0) to[out=90,in=270] (9.75,2 |- A.south);

\draw[thick,>=stealth,->,looseness=.4] (5.7,0) to[out=90,in=270] (1.8,2 |- A.south);
\draw[thick,>=stealth,->,looseness=.4] (5.9,0) to[out=90,in=270] (2.1,2 |- A.south);
\node at (6.6,0) {$\cdots$};
\draw[thick,>=stealth,->,looseness=.4] (7.5,0) to[out=90,in=270] (2.8,2 |- A.south);

\end{tikzpicture}
\end{centering}
\caption{Radsort's data structures at initialisation} \label{fig:datalayout}
\end{figure}

\begin{figure}[p]
\centering%
\input fig-sortphase
\caption{Algorithm state during the second round of sorting an array of strings
with $n=26$, $n_t=4$, $b=4$, and $\Sigma=\{\texttt a, \texttt b, \texttt c\}$.
Changes highlighted in red.
Unallocated cells are blanked out.} \label{fig:sortphase}
\end{figure}

\subsection{Initialisation} \label{sec:init}
The variables are initially set up (cf.~Fig.~\ref{fig:datalayout}) such that the contents of~$A$ are represented by
the blocks at logical indices $\sigma$ to~$\sigma+\lceil n/b\rceil$.
The final $n\bmod b$~elements of~$A$ are copied to block~$\sigma$.
This establishes the \emph{data invariant} of the state variables,
shown below.  Each round of sorting assumes that this invariant holds
initially and reestablishes it at conclusion.

\begin{enumerate}
\item $\pi$ is a permutation of $0\ldots n_\pi-1$
\item $P$ tracks the partial blocks in ascending order of logical index
\item the block at logical index $f-1$ is a partial block
\item logical indices $0$ to~$\sigma-1$ and $f$ to$~n_\pi-1$ refer to unallocated blocks
\item logical indices $\sigma$ to~$f-1$ refer to allocated blocks, the concatenation of whose
used elements holds the input permuted according to the current progress of the algorithm
\end{enumerate}

\begin{algorithm}
\caption{initialisation of variables}\label{alg:setup}
\begin{algorithmic}[1]
\Procedure{setup}{$A,n$}
\State$f\gets\sigma+\lfloor n/b\rfloor+1$
\State$\pi[0\ldots\sigma-1]\gets0,\ldots,\sigma-1$\Comment{assign head start}
\State$\pi[\sigma\ldots f-2]\gets2\sigma,\ldots,\sigma+f-2$\Comment{assign contents of~$A$}
\State$\pi[f-1\ldots n_\pi-1]\gets\sigma,\ldots,2\sigma-1$\Comment{assign remaining blocks}
\State$T[0\dots2\sigma b-1]\gets0$\Comment{filly~$T$ with dummy values (optional)}
\State$\blockat(\sigma)[0\dots(n\bmod b)-1]\gets A[b\lfloor n/b\rfloor\ldots n-1]$\Comment{copy tail to scratch block}
\State$P[0]\gets(f-1,n\bmod b)$\Comment{track input tail as partial block}
\State$P[1\ldots\sigma-1]\gets(n_\pi,0)$\Comment{fill $P$ with dummy values (optional)}
\EndProcedure
\end{algorithmic}
\end{algorithm}

\subsection{Sorting}
Each round of sorting (cf.~Fig.~\ref{fig:sortphase}) sorts the elements of~$A$
stably according to some key position~$t$.
We write $\key(A[i],t)\in\Sigma$ for key position~$t$ of element~$A[i]$.
First, the \emph{sort phase} sorts elements into
buckets according to the value of their key.
Then, the \emph{fixup phase} computes new $\pi$ and~$P$ based on the results of the sort phase,
restoring the data invariant.
The sort and fixup phases require several additional variables that can be discarded afterwards:

\vspace{\abovedisplayskip}
\begin{centering}
\begin{tabular}{rcl}
\toprule
$B$&bucket allocation&array of $\sigma$~pairs of pointers\\
$C$&block counts&array of $\sigma$ integers\\
$S$&bucket starts&array of $\sigma$ integers\\
$U$&block usage&array of $n_\pi$ characters\\
$\pi'$&new permutation&array of $n_\pi$ integers\\
$i_{\mathrm{in}}$&next input block&integer\\
$i_{\mathrm{out}}$&next output block&integer\\
$i_P$&next partial block&integer\\
\bottomrule
\end{tabular}
\end{centering}

\subsubsection{Sort phase}
The sort phase allocates one block for each bucket in the bucket array~$B$.  This array
holds for each bucket~$c$ a pair of pointers $(p_{\mathrm{next}},p_{\mathrm{end}})$
pointing to the next free element, and the end of the block.
It then traverses the allocated blocks by~$i_{\mathrm{in}}$ in logical order, and for
each block sorts its elements into the correct output block.  If an output block is full,
a new block is drawn from the next logical index~$i_{\mathrm{out}}$ using
the~$\Call{newBucketBlock}{c}$ procedure.
The bucket each output block is
used for is tracked in the usage array~$U$, and the number of blocks assigned to each
of the buckets is tracked in~$C$.

As the sort proceeds through the input array, consumed blocks of input are reassigned into
output blocks.  It is critical that blocks of input are fully consumed before they are reassigned
into output blocks, lest elements are overwritten before they can be sorted.  This is ensured
through $i_{\mathrm{in}}$~having a \emph{head start} of $\sigma$~unallocated blocks:

\begin{lemma}
When a new output block needs to be assigned, $i_{\mathrm{out}}<i_{\mathrm{in}}$.
\end{lemma}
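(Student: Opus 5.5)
We prove the claim by a counting (potential) argument comparing the number of output blocks drawn so far with the number of elements already consumed from the input. First we fix the conventions of the sort phase as described. Initially each of the $\sigma$ buckets receives one of the head-start blocks at logical indices $0,\ldots,\sigma-1$, so $i_{\mathrm{out}}$ starts at $\sigma$. The input pointer $i_{\mathrm{in}}$ starts at logical index $\sigma$ and walks through the allocated blocks $\sigma,\ldots,f-1$. We regard an input block at logical index $j$ as consumed, and $i_{\mathrm{in}}$ as having advanced past it, once all of its used elements have been sorted into output blocks. The statement to establish is this: whenever \Call{newBucketBlock}{$c$} is invoked, the index $i_{\mathrm{out}}$ it is about to hand out is strictly smaller than the index $i_{\mathrm{in}}$ of the block currently being read. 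It follows that the block handed out has been fully consumed.

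The key step is an invariant relating the number $k$ of blocks drawn beyond the head start (so $i_{\mathrm{out}}=\sigma+k$) to the number $m$ of elements already distributed. Every drawn block other than the $\sigma$ currently open ones is full. Hence at the moment a new block is needed, all $\sigma+k$ blocks assigned so far are full, except possibly the currently open blocks of other buckets. The bucket $c$ requesting a new block has just filled its block completely. So at least $k+1$ blocks are completely full: every bucket is allowed one unfull block, but bucket $c$'s current block is full too. This gives $m \ge (k+1)b$ at the moment of the request; more crudely, $m\ge kb+b$ since only $\sigma-1$ blocks can be non-full.

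Next we bound $m$ from above in terms of $i_{\mathrm{in}}$. The input blocks at logical indices $\sigma,\ldots,i_{\mathrm{in}}-1$ have been consumed, and block $i_{\mathrm{in}}$ is partially consumed. Each input block holds at most $b$ elements; partial blocks tracked in $P$ hold fewer, which only helps. The element currently being placed comes from block $i_{\mathrm{in}}$ and has not yet been written. Hence $m \le (i_{\mathrm{in}}-\sigma)b + (b-1)$, and so $m<(i_{\mathrm{in}}-\sigma+1)b$. Combining the two bounds gives $(k+1)b \le m < (i_{\mathrm{in}}-\sigma+1)b$, i.e.\ $k< i_{\mathrm{in}}-\sigma$. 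Therefore $i_{\mathrm{out}}=\sigma+k<i_{\mathrm{in}}$, which is the claim.

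The main obstacle is getting the off-by-one bookkeeping right: exactly when $i_{\mathrm{in}}$ advances, whether the element triggering the request counts as consumed, and whether the head-start blocks are counted in $k$. I would first pin these down against the pseudocode of the sort phase (Alg.~\ref{alg:sortphase}). If the conventions turn out to be shifted, the slack of $\sigma-1$ blocks in the lower bound is the place to absorb them. The strict lower bound $m\ge (k+1)b$ only uses the fullness of the requesting bucket's block; the remaining non-full blocks merely strengthen nothing we need. A secondary point is that $i_{\mathrm{out}}$ only ranges over allocated-then-consumed indices and never reaches $f$ or beyond during the scan. This also follows from the same inequality, since $i_{\mathrm{in}}\le f-1$ throughout.
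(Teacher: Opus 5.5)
Your counting is the same as the paper's: full output blocks on one side, input blocks of at most $b$ elements on the other. However, the off-by-one you flag in your last paragraph is a real gap in your write-up, and it cannot be absorbed where you suggest.

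In Alg.~\ref{alg:sortphase} the element is copied \emph{before} the test $p_{\mathrm{next}}+1=p_{\mathrm{end}}$. So when \textsc{newBucketBlock}$(c)$ runs, the triggering element has already been written. Your lower bound $m\ge(k+1)b$ depends on this, since that element is what fills bucket $c$'s block. Your upper bound then excludes the same element (``has not yet been written''), so the two bounds count it inconsistently. Counting it consistently, with $i_{\mathrm{in}}$ the block currently being read (the loop variable, as in your statement of the goal), gives only $m\le(i_{\mathrm{in}}-\sigma+1)b$ and hence $i_{\mathrm{out}}\le i_{\mathrm{in}}$.

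Equality does occur. In the first round with $n\ge b$, suppose all $b$ elements of logical block $\sigma$ have key $c$. The last of them fills $c$'s head-start block, and \textsc{newBucketBlock}$(c)$ is called with $i_{\mathrm{out}}=\sigma=i_{\mathrm{in}}$. The $\sigma-1$ other open blocks give no slack to absorb this, since they may all be empty, as they are here.

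There are two clean repairs. First, you can commit to your stated convention that $i_{\mathrm{in}}$ is the first block not yet exhausted, so it may already point past the block just read. Then justify $m\le(i_{\mathrm{in}}-\sigma)b+b-1$ by the fact that block $i_{\mathrm{in}}$ is not exhausted, not by the current element being unwritten. Second, you can keep the loop variable and prove $i_{\mathrm{out}}\le i_{\mathrm{in}}$. Equality then forces $m=(i_{\mathrm{in}}-\sigma+1)b$, i.e.\ the triggering element is the last of a full block $i_{\mathrm{in}}$. So the block handed out has been read completely, which is the property the lemma is really needed for.

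The paper's proof glosses over the same point. Its step $\sigma+\lfloor j/b\rfloor\le i_{\mathrm{in}}$ reads $\sigma+1\le\sigma$ in the example above unless $i_{\mathrm{in}}$ is understood in your first convention.
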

\begin{proof}
Let $j$~be the number of elements processed so far.
The input comprises of blocks holding at most $b$~elements, so at least
$\lfloor j/b\rfloor$ blocks of input have been processed so far, hence
$\sigma+\lfloor j/b\rfloor\le i_{\mathrm{in}}$.
As for output blocks, a new one is assigned whenever the current block of some bucket
is full.  Thus there are now $\sigma-1$ partial output blocks (one for
each bucket, except for the bucket we need to assign a new output block to)
as well as up to $\lfloor j/b\rfloor$ full output blocks.  Hence
$i_{\mathrm{out}}\le\sigma-1+\lfloor j/b\rfloor$.  Joining these two
together gives
\begin{equation}
i_{\mathrm{out}}\le\sigma-1+\lfloor j/b\rfloor<\sigma+\lfloor j/b\rfloor\le i_{\mathrm{in}}
\end{equation}
as required.
\end{proof}

\begin{algorithm}
\caption{sort phase}\label{alg:sortphase}
\begin{algorithmic}[1]
\Procedure{newBucketBlock}{$c$}\Comment{allocate a new bucket block for key~$c$}
 \State$p_{\mathrm{next}}\gets\blockat(\pi[i_{\mathrm{out}}])$
 \State$B[c]\gets(p_{\mathrm{next}},p_{\mathrm{next}}+b)$
 \State$C[c]\gets C[c]+1$
 \State$U[i_{\mathrm{out}}]\gets c$
 \State$i_{\mathrm{out}}\gets i_{\mathrm{out}}+1$
\EndProcedure
\Procedure{sortPhase}{t}
\State$C[0\ldots\sigma-1]\gets i_P\gets i_{\mathrm{out}}\gets0$\Comment{initialise variables}
\State\textbf{for $c\in\Sigma$ do} \Call{newBucketBlock}{c}\Comment{assign initial output blocks}
\For{$i_{\mathrm{in}}\gets\sigma\ldots f-1$}
 \State$p_{\mathrm{in}}\gets\blockat(\pi[i_{\mathrm{in}}])$\Comment{get next block of input}
 \State$l\gets\Call{sizeOfNextBlock}{i_{\mathrm{in}},i_P}$
 \For{$j\gets 0\ldots l-1$}
  \State$c\gets\key(p_{\mathrm{in}}[j],t)$\Comment{find bucket of current element}
  \State$(p_{\mathrm{next}},p_{\mathrm{end}})\gets B[c]$
  \State copy $p_{\mathrm{in}}[j]$ to memory at $p_{\mathrm{next}}$ \label{stmt:elemwrite}
  \If{$p_{\mathrm{next}}+1=p_{\mathrm{end}}$}\Comment{bucket full?}
   \State$\Call{newBucketBlock}{c}$
  \Else
   \State$B[c]\gets(p_{\mathrm{next}}+1,p_{\mathrm{end}})$
  \EndIf
 \EndFor
\EndFor
\EndProcedure
\end{algorithmic}
\end{algorithm}

\subsubsection{Fixup phase}
After the sort phase, the blocks at logical index $0$ to~$i_{\mathrm{out}}$ hold the
sorted buckets in some interleaving, with up to $b$~partial blocks described by~$B$.
The fixup phase computes a new permutation~$\pi'$ that starts with
a head start of $\sigma$~unallocated blocks, followed by the deinterleaved buckets in order
of keys and finally any remaining unallocated blocks.
We are guaranteed to always find $\sigma$~unallocated blocks for the head start:

\begin{lemma}
After the sort phase, there are at least $\sigma$~unallocated blocks.
\end{lemma}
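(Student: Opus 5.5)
A counting argument suffices. We compare the total number of blocks $n_\pi$ with the number of blocks that are allocated once the sort phase ends, i.e.\ with $i_{\mathrm{out}}$ at termination. We then show that $n_\pi-i_{\mathrm{out}}\ge\sigma$.

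\emph{Step 1: count the elements.} Let $n$ be the number of elements. By the data invariant, the used elements of logical indices $\sigma$ to $f-1$ hold exactly the $n$ input elements. Every element is written exactly once into some output block (statement~\ref{stmt:elemwrite}).

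\emph{Step 2: bound the output blocks.} After the phase, each bucket $c$ owns $C[c]$ blocks. All of them are full except the last, which holds between $0$ and $b-1$ elements, because a new block is drawn immediately once the current one fills. Hence $\sum_c (C[c]-1)\,b\le n$, so the number of full output blocks is at most $\lfloor n/b\rfloor$. The total number of allocated blocks is therefore
\[
i_{\mathrm{out}}=\sum_c C[c]\le\lfloor n/b\rfloor+\sigma .
\]

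\emph{Step 3: conclude.} Since $n_\pi=\lfloor n/b\rfloor+2\sigma$, the logical indices $i_{\mathrm{out}},\ldots,n_\pi-1$ number at least $\sigma$. None of these has been handed out as an output block. It remains to check that they are indeed unallocated, i.e.\ that they do not still hold unconsumed input. By the previous lemma, input blocks are fully consumed before they are reused. At the end of the phase, all input blocks (logical indices below $f$) have been consumed, and indices at or beyond $f$ were unallocated from the start. So every block not among the first $i_{\mathrm{out}}$ logical indices holds no live data. Since $\pi$ is a permutation, these are at least $\sigma$ distinct physical blocks.

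The main obstacle is Step 2: the bound $\sum_c (C[c]-1)\le\lfloor n/b\rfloor$ is where the $\sigma$ spare scratch blocks are needed. The subtle point is that each bucket's final block may be nearly empty, even completely empty, and this costs up to $\sigma$ extra blocks. That cost is exactly what the second batch of $\sigma$ scratch blocks in $T$ pays for. I would state this carefully, noting that a block is only allocated via $\Call{newBucketBlock}{c}$ either initially or right after bucket $c$'s previous block became full.
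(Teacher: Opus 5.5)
Your proof is correct and uses the same counting argument as the paper: at most $\lfloor n/b\rfloor$ full output blocks plus one non-full block per bucket gives $i_{\mathrm{out}}\le\lfloor n/b\rfloor+\sigma$, hence $n_\pi-i_{\mathrm{out}}\ge\sigma$. Your Step~3, which checks that the logical indices from $i_{\mathrm{out}}$ onward hold no live data, spells out what the paper states in one clause (``each block after sorting is either used as an output block or unallocated'').
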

\begin{proof}
A full block holds $b$~elements.  As there are $n$~elements in total, there are at most
$\lfloor n/b\rfloor$~full output blocks.  Furthermore, there are $\sigma$~partial blocks
tracked in~$B$.  As each block after sorting is either used as an output block or unallocated,
and as there are $n_\pi$~blocks in total, we find for the number of unallocated blocks~$n_{\mathrm{unallocated}}$
\begin{equation}
n_{\mathrm{unallocated}}\ge n_\pi-\sigma-\lfloor n/b\rfloor=\sigma
\end{equation}
as required.
\end{proof}

\noindent
The buckets are deinterleaved by first computing the starting
indices~$S[0\ldots\sigma-1]$ of the buckets and then by shuffling the blocks according
to the usages~$U$ into~$\pi'$.  Meanwhile, each allocated block is compared with the
pointers in~$B$ to populate~$P$ with the whereabouts of the new partial blocks.  As there is
one partial block for each bucket, each of the $\sigma$~elements of~$P$ ends up being initialised.
Finally, $\pi$ is set to $\pi'$ to restore the data invariant.

\begin{algorithm}
\caption{fixup phase}\label{alg:fixupphase}
\begin{algorithmic}[1]
\Procedure{fixupPhase}{}
\State$S[0]\gets\sigma$
\State\textbf{for $i\gets 1\ldots\sigma-1$ do} $S[i]\gets S[i-1]+C[i-1]$\Comment{prefix sum over $C$ plus $\sigma$}
\State$\pi'[0\ldots\sigma-1]\gets\pi[i_{\mathrm{out}}\ldots i_{\mathrm{out}}+\sigma-1]$\Comment{assign head start}
\For{$i\gets0\ldots i_{\mathrm{out}}-1$}\Comment{shuffle used blocks into order of buckets}
 \State$c\gets U[i]$
 \State$\pi'[S[c]]\gets\pi[i]$
 \State$(p_{\mathrm{next}},p_{\mathrm{end}})\gets B[c]$
 \If{$p_{\mathrm{end}}=\blockat(\pi[i])+b$}\Comment{is $\pi[i]$ a partial block?} \label{stmt:ispartialblock}
  \State$P[c]\gets\bigl(S[c],p_{\mathrm{next}}-\blockat(\pi[i])\bigr)$
 \EndIf
 \State$S[c]\gets S[c]+1$
\EndFor
\State$\pi'[i_{\mathrm{out}}+\sigma\ldots n_\pi-1]\gets\pi[i_{\mathrm{out}}+\sigma\ldots n_\pi-1]$\Comment{assign remaining blocks}
\State$f\gets i_{\mathrm{out}}+\sigma$
\State$\pi\gets\pi'$
\EndProcedure
\end{algorithmic}
\end{algorithm}

\begin{algorithm}[t]
\caption{restoration of the contents of $A$} \label{alg:finalize}
\begin{algorithmic}[1]
\Procedure{swap}{$i_1,j_1,i_2,j_2$}
  \State$\pi[i_1],\pi[i_2]\gets j_2,j_1$
  \State$\pi^{-1}[j_1],\pi^{-1}[j_2]\gets i_2,i_1$
  \State$i_1,i_2\gets i_2,i_1$
\EndProcedure
\Procedure{finalize}{}
\State$k\gets i_P\gets0$
\State$i_{\mathrm{free}},j_{\mathrm{free}}\gets0,\pi[0]$\Comment{guaranteed unallocated as per invariant}
\State\textbf{for $i\gets 0\dots n_\pi-1$ do $\pi^{-1}[\pi[i]]\gets i$}\Comment{compute inverse permutation}
\For{$i_{\mathrm{in}}\gets\sigma\ldots\sigma+\lfloor n/b\rfloor-1$}\Comment{pull first $\lfloor n/b\rfloor$ blocks}
 \State$j_{\mathrm{in}}\gets\pi[i_{\mathrm{in}}]$
 \State$j_{\mathrm{out}},i_{\mathrm{out}}\gets i_{\mathrm{in}}-\sigma,\pi^{-1}[i_{\mathrm{in}}-\sigma]$
 \If{$i_{\mathrm{in}}<i_{\mathrm{out}}<f$}\Comment{$j_{\mathrm{out}}$ not the right block, but allocated?}
  \State copy block at $j_{\mathrm{out}}$ to block at $j_{\mathrm{free}}$\Comment{push block at $j_{\mathrm{out}}$ away}
  \State$\Call{swap}{i_{\mathrm{out}},j_{\mathrm{out}},i_{\mathrm{free}},j_{\mathrm{free}}}$\Comment{track swapping of $i_{\mathrm{free}}$ and $i_{\mathrm{out}}$}
 \EndIf
 \State$l\gets\Call{sizeOfNextBlock}{i_{\mathrm{in}},i_P}$
 \State$A[k\ldots k+l-1]\gets\blockat(j_{\mathrm{in}})[0\ldots l-1]$\Comment{pull block from $j_{\mathrm{in}}$}
 \State$k\gets k+l$
  \State$\Call{swap}{i_{\mathrm{out}},j_{\mathrm{out}},i_{\mathrm{in}},j_{\mathrm{in}}}$\Comment{track swapping of $i_{\mathrm{in}}$ and $i_{\mathrm{out}}$}
 \If{$i_{\mathrm{in}}\ne i_{\mathrm{out}}$}\Comment{did we pull from elsewhere?}
   \State$i_{\mathrm{free}},j_{\mathrm{free}}\gets i_{\mathrm{in}},j_{\mathrm{in}}$\Comment{if yes, there is now a new free spot}
 \EndIf
\EndFor
\For{$i_{\mathrm{in}}\gets\sigma+\lfloor n/b\rfloor\ldots f-1$}\Comment{pull any remaining blocks (all in $T$)}
 \State$j_{\mathrm{in}}\gets\pi[i_{\mathrm{in}}]$
 \State$l\gets\Call{sizeOfNextBlock}{i_{\mathrm{in}},i_P}$
 \State$A[k\ldots k+l-1]\gets\blockat(j_{\mathrm{in}})[0\ldots l-1]$\Comment{pull block from $j_{\mathrm{in}}$}
 \State$k\gets k+l$
\EndFor
\EndProcedure
\end{algorithmic}
\end{algorithm}

\subsection{Finalisation}

Following some rounds of sorting, the used elements of the blocks at logical indices $\sigma$ to~$f-1$
hold the elements of the input in sorted order.  To wrap things up, we need to shuffle
this collection of logically ordered blocks back into~$A$, preserving the ordering.

This is achieved by shuffling the used blocks around such that they are found at indices~$2\sigma\ldots 2\sigma+f-1$,
When the blocks are moved into place, their elements
are actually moved right behind the elements of the previous block, ignoring gaps of
unused elements at the end of partial blocks, and leaving the elements in their
final locations.
After $\lfloor n/b\rfloor$~blocks have been shuffled to indices $2\sigma$ to~$n_\pi-1$,
all remaining blocks must be located in~$T$.
As an optimisation, we can directly copy them to their final positions without
moving other data away, as there is no risk of other blocks being in their way.

In this step, pairs of indices~$i_\circ,j_\circ$ refer to the
logical and physical indices of the same block.
These are related through the identity
\begin{equation}
j_\circ=\pi[i_\circ],\quad i_\circ=\pi^{-1}[j_\circ].
\end{equation}
The following extra variables are used:

\vspace{\abovedisplayskip}
\begin{centering}
\begin{tabular}{rcl}
\toprule
$\pi^{-1}$&inverse permutation&array of $n_\pi$ integers\\
$i_{\mathrm{in}}, j_{\mathrm{in}}$&block holding next data chunk&integers\\
$i_{\mathrm{out}}, j_{\mathrm{out}}$&where next block goes&integers\\
$i_{\mathrm{free}}, j_{\mathrm{free}}$&some unallocated block&integers\\
$i_P$&next partial block&integer\\
$k$&no.~of elements written into $A$&integer\\
\bottomrule
\end{tabular}
\vspace{\belowdisplayskip}
\end{centering}

To help us carry out these shuffles, we first compute an inverse permutation $\pi^{-1}$ such
that $\pi\circ\pi^{-1}=(0,\ldots,n_\pi-1)$.  The assignment of two blocks $j_1$ and~$j_2$ can be
swapped with the $\Call{swap}{i_1,j_1,i_2,j_2}$ function, while moving their contents in a separate
step.

For each block index~$2\sigma\le j<n_\pi$, we
distinguish three cases: (a)~the right block is already at index~$j$, in which case we
just move it ahead to fill the gap,  (b)~the block is unallocated, in which case we find the
right block at~$\pi^{-1}[j]$ and \emph{pull} it to its final position, or (c)~there is some other
allocated block at index~$j$, in which case we \emph{push} it to some free block and reduce to
case~(b).  The variable~$j_{\mathrm{free}}$ tracks a spot to push an empty block to.
whenever this spot is occupied, we are going to subsequently pull a block from
somewhere else, restoring the unallocated spot.


\section{Complexity Analysis} \label{sec:analysis}
For analysis, we assume array element size and alphabet size~$\sigma$ to be
constants and $b\in\mathcal O(n)$.

\subparagraph*{Runtime}
The $\Call{setup}{A,n}$ procedure from Alg.~\ref{alg:setup} initialises
arrays of $n_\pi\in\mathcal O(n/b)$ and~$b$ elements respectively for a
total runtime of~$\mathcal O(n/b+b)\subset\mathcal O(n)$.
The $\Call{sortPhase}{t}$ procedure from Alg.~\ref{alg:sortphase} executes
one inner loop iteration for each of the $n$~array elements for a
total runtime of~$\mathcal O(n)$.
The $\Call{fixupPhase}{}$ procedure from Alg.~\ref{alg:fixupphase}
writes to each element of~$\pi'$ once, for a total runtime of~$\mathcal O(n/b)$.
Lastly, the $\Call{finalize}{}$ procedure from Alg.~\ref{alg:finalize}
traverses $\pi$, copying $\lfloor n/b\rfloor$~blocks at most twice,
while copying the remaining $f-\lfloor n/b\rfloor-\sigma\le\sigma$ blocks
at most once for a total runtime of~$\mathcal O(b(n/b+\sigma))\subset\mathcal O(n)$.
For the full sort given in Alg.~\ref{alg:fullsort}, the $\Call{sortPhase}{t}$
and~$\Call{fixupPhase}{}$ procedures are called once for each key position,
raising the total runtime to $\mathcal O(n_t n)$, which matches the conventional
LSD radix sort.

\subparagraph*{Space}
In addition to the input stored in~$A$, array $T$ occupies
$\mathcal O(b)$~bytes of storage, arrays~$U$, $\pi$, $\pi'$, and~$\pi^{-1}$ occupy
$\mathcal O(n/b)$ bytes of storage, and the other variables are of constant size,
for a total extra storage requirement of $\mathcal O(b+n/b)$.  This requirement
can be minimised by choosing some $b\in\Theta(\sqrt n)$,
giving just $\mathcal O(\sqrt n)$~bytes of extra storage.

\section{Implementation Notes and Optimisations} \label{sec:implnotes}
We have produced a variety of Radsort implementations in the
C~language.  We make our code freely available.%
\footnote{See \url{https://github.com/clausecker/radsort}.}
Implementations in other languages should be straightforward; in languages
that do not provide pointers or pointer arithmetic, the $B$~array can be
altered to use the same data structure as the $P$~array, although at a loss
of performance.

Using the pointer-based data structure of the $B$~array for
the $P$~array is not advisable: While $\Call{fixupPhase}{}$ (Alg.~\ref{alg:fixupphase})
is simplified by not needing to translate $B$ into~$P$, performance
of~$\Call{sortPhase}{t}$ (Alg.~\ref{alg:sortphase}) is reduced due
to the longer dependency chain when checking if the next input block
is partial.  Additionally, the $\Call{finalize}{}$~procedure
(Alg.~\ref{alg:finalize}) needs partial blocks to be tracked by their
logical indices, mandating that the data structure be translated at
least once.

A radix of~$\sigma=256$ appears to be optimal for sorting key-value
pairs of integers, though the best choice depends on cache and
element size, as well as key type and distribution and must be
determined empirically.  The basic tradeoff is that larger radices
allow for less rounds of sorting, but require more cache to track the
buckets, eventually escalating into higher levels of the cache hierarchy
at great performance cost.

For a type-generic implementation, the $\Call{sortPhase}{t}$~procedure
should be monomorphised, ideally for each key position.  In the common
case of integer keys split into byte-sized digits ($\sigma=256$),
retrieving the byte at position~$t$ of the key by means of shifts and
masks performs better than direct loading of the key byte from memory,
as the latter generally incurs an extra memory operation, while the
former shifts the load to arithmetic execution units that are otherwise
underutilised.

While a block size of order $b\in\Theta(\sqrt n)$ minimises space overhead,
choosing a fixed block size simplifies the implementation.
The space overhead then becomes some
fixed fraction of the input size, which is often acceptable.  For example,
we used~$b=512$ in our experiments.  At an element size of~$8\,\textrm B$,
this gives a fixed overhead of~$2\,\textrm{MiB}$ for the $T$~array,
$8\,\textrm{KiB}$~for arrays~$B$,
$C$, $P$, and~$S$ together, and $9\,\textrm B$ per block of input
(i.\,e.~0.22\,\%~the input data set) to store $\pi$, $\pi'$, $\pi^{-1}$,
and~$U$, reusing $\pi'$ for~$\pi^{-1}$.

Like with a conventional LSD radix sort, performance of Radsort is
memory bound and eliminating just a single load or store from the
inner loop of $\Call{SortPhase}{t}$ can improve performance dramatically,
as those memory accesses compete for resources with the write
in Alg.~\ref{alg:sortphase}, L.~\ref{stmt:elemwrite}, the bottleneck of the
algorithm.  One technique to eliminate such a load is discussed
in~\S~\ref{sec:fasteob}, and gives
a 5--50\,\% speedup depending on array size and microarchitecture.

\subsection{Avoiding Finalisation}
In some use cases, it suffices to iterate over the sorted data.  In such
cases, the $\Call{finalize}{}$~procedure can be skipped, employing instead
the $\Call{sizeOfNextBlock}{i,i_P}$~function from Alg.~\ref{alg:blocksize} to
iterate over the array elements in permuted representation as in
Alg.~\ref{alg:sortphase} (see also \S~\ref{sec:datastructures}).

Using such an iterator pattern, the dataset can be modified in place,
and even be truncated (cf.~Alg.~\ref{alg:truncate}) without
disturbing the Radsort algorithm state.  This enables algorithm designs
where the same dataset is repeatedly sorted by different keys, and then
modified, without having to finalise the sort or recreate the algorithm state
for every sort.

\begin{algorithm}
\caption{truncate the dataset such that logical block~$i$, element~$j$ is the final datum} \label{alg:truncate}
\begin{algorithmic}
\Procedure{truncate}{$i,j$}
\State$f\gets i+1$ \Comment{mark logical block~$i$ as the final block}
\For{$i_P\gets0\ldots\sigma-1$}\Comment{skip partial blocks before logical block~$i$}
 \State$(i_{i_P},l_{i_P})\gets P[i_P]$
 \If{$i\le i_{i_P}$}
  \State$P[i_P]\gets(i,j+1)$\Comment{logical block~$i$ ends at element~$j$}
  \State\textbf{return}
 \EndIf
\EndFor
\EndProcedure
\end{algorithmic}
\end{algorithm}

\subsection{Faster End-of-Block Checking} \label{sec:fasteob}
The $\Call{sortPhase}{t}$~procedure of Alg.~\ref{alg:sortphase} tracks
output blocks through pairs~$(p_{\mathrm{next}},p_{\mathrm{end}})$
pointing to the next free element and the end of the block, respectively.
It is beneficial to reduce this pair to just
one pointer, as to eliminate a load of the second pointer
in the hot loop, and to reduce cache pressure.

In a simplified model, where each element is one byte
in size and the allocation of all data structures including~$A$ can be
controlled, this can be achieved by choosing some block size~$b=2^q$
and ensuring both $A$ and~$T$ are aligned to a multiple of~$b$.
Then, no~$p_{\mathrm{end}}$ is needed, as a pointer into some block
of $A$ or~$T$ points to a block boundary iff it is aligned to a
multiple of $b$~bytes---which can easily be checked with a
bitwise-and on the pointer.\footnote{The $p_{\mathrm{end}}=\blockat(\pi[i])+b$
check of Alg.~\ref{alg:fixupphase}, L.~\ref{stmt:ispartialblock} must then be
realised with a range check like $\blockat(\pi[i])\le p_{\mathrm{next}}<\blockat(\pi[i])+b$.
This is not strictly legal in the C23 language~\cite[\S~6.5.8 \P~6]{c23},
(pointers may only be compared for ordering if they point into the same array,
but $p_{\mathrm{next}}$ may point into either $A$ or~$T$)
but unproblematic in practice.}

In real applications, we usually cannot control the alignment of~$A$
and neither is the element size guaranteed to be some nice number.
However, with some modifications, the same idea
still works: let the element size be~$2^r e$, with $e$~odd and the
block size again be~$b=2^q$.  Then in an array of such elements of
no particular alignment, every $b$~elements there is an element with
address~$p_0$ such that
\begin{equation}
p_0\gg r\equiv 0\mod b, \label{eq:aligned}
\end{equation}
where $\gg$ denotes a bitwise logical right-shift.
This condition is just as easy to check using a bitwise-and on the
pointer and allows us to find block boundaries if the first element
of each block satisfies Eq.~\ref{eq:aligned}.
Furthermore, for the address of an arbitrary array element~$p$, we can find
the offset~$0\le o<b$ of such an element from~$p$ as
\begin{equation}
o=-(p\gg r)e^{-1}\bmod b \label{eq:alignoffset}
\end{equation}
where $e^{-1}$~is the modular inverse of~$e$ modulo~$b$.
Due to $b$~being a power of two, $e^{-1}$~is easy to compute~\cite{Hurchalla2022}.
This gives the following modifications: allocate
$2\sigma+1$~blocks\footnote{An extra scratch block is needed as the total number
of elements shunted to head and tail may be up to~$2b-2$, possibly exceeding a
whole block of elements.} for~$T$
aligned to a multiple of~$2^{qr}$, ensuring that its block address satisfy Eq.~\ref{eq:aligned}.
Compute the offset~$o$ of the first element of~$A$ satisfying Eq.~\ref{eq:aligned}
using Eq.~\ref{eq:alignoffset}.
The $\blockat(i)$~function is adjusted to
\begin{equation}
\blockat(i)=
\begin{cases}
\mbox{address of $T[i]$}&\mbox{if $0\le i<2\sigma+1$}\\
\mbox{address of $A[(i-2\sigma-1)b+o]$}&\mbox{if $2\sigma+1\le i<n_\pi$}
\end{cases}
\end{equation}
and~$n_\pi=\lfloor(n-o)/b\rfloor+2\sigma+1$.
Transfer the first $o$~elements to a scratch block and adjust Alg.~\ref{alg:setup}
such that this scratch block (the head) ends up at logical block~$\sigma$,
followed by the blocks overlaying~$A$, following the tail.
Various other steps of the algorithm must also be adjusted to account for the
extra scratch block.
The end-of-block check then becomes a check for the condition of Eq.~\ref{eq:aligned},
which is realised by masking the pointer with a bitmask of the form~$(b-1)\ll r$
and checking if the result is zero.  The $p_{\mathrm{end}}$~pointers no longer
need to be read and can be eliminated entirely.

This approach yielded a 5--50\,\% speedup depending on array size
and microarchitecture.
It should be considered if the programming language and environment permit it.

\subsection{Parallel Operation}
The algorithm can be parallelised with some changes to the data structures.
Before each sort phase, we distribute the blocks of the input array into
$n_t$~roughly evenly sized chunks, for sorting with $n_t$~threads.
Each chunk gets its own head start of $\sigma$~blocks,
requiring a $T$~array of $2\sigma n_t$~scratch blocks in total.
The sort phase then proceeds in parallel with each thread sorting its chunk
with its own $B$~array.  The fixup phase is sequential and must interleave the
individual thread's $B$~arrays into one global $P$~array of $n_t\sigma$~entries each.

The finalisation procedure is more troublesome to parallelise.  It can be
left as a sequential operation, as it only contributes a small amount of the total
run time, or it can be implemented by first shuffling all the blocks into order using
a parallel permutation algorithm~\cite{Hagerup1995}, and then moving the array
elements to eliminate gaps caused by partial blocks.

As Radsort is memory bound and exhibits only moderate cache locality,
a performance ceiling quickly is reached as the available memory channels
are saturated.  See \S~\ref{sec:results} for more discussion.
It may be of interest to use an initial round of sorting on the most
significant key position to split the input into buckets that can be parcelled
into threads, each of which processes the remaining key positions of its
bucket(s) from the least-significant digit sequentially.  This reduces the
working set of each thread from~$n$ to $n/n_t$~elements on
average, improving cache efficacy and reducing NUMA effects.

\section{Evaluation}
We evaluated the performance of Radsort in comparison to a classic out-of-place
LSD radix sort on a variety of systems.
Five algorithm variants are compared in total:

\begin{description}
\item[generic]
A generic out-of-place LSD radix sort (Alg.~\ref{alg:ooplsd}) with
output prefetching~\cite{Downs2019}.  An initial pass takes histograms
of the 4~key bytes, followed by 4~rounds of sorting.
\item[swc]
An out-of-place LSD radix sort implemented with Wassenberg's
soft\-ware-de\-fi\-ned write-com\-bin\-ing~\cite{Wassenberg2010}
with a block size of~$512\,\mathrm B$.
\item[radsort]
Single-threaded Radsort (Alg.~\ref{alg:fullsort}) as described in \S~\ref{sec:radsort} with
none of the improvements mentioned in \S~\ref{sec:implnotes} using a fixed
block size of~$b=512$ (i.\,e.~$4\,\mathrm{KiB}$).
\item[bitmanip]
Single-threaded Radsort implemented using the bit-manipulation based end-of-block
checking described in \S~\ref{sec:fasteob}, otherwise the same as \emph{radsort}.
\item[parallel]
Multi-threaded Radsort using 4~threads by default.  Otherwise
the same as~\emph{radsort}.
\end{description}

\subsection{Setup}
All benchmarks sort pairs of 32-bit keys and 32-bit values in $n_t=4$~rounds
by each key byte in turn~($\sigma=256$).
The key values are uniformly distributed over the
range~$0\ldots2^{32}-1$, as generated by a xorshift RNG~\cite{Marsaglia2003}.
While we have evaluated radsort on a variety of platforms, we have
selected benchmark results as measured on the following machines for this
paper:

\vspace{\abovedisplayskip}
\def\tblheading#1{\textcolor{lipicsGray}{\sffamily\bfseries\mathversion{bold}#1}}
\begin{centering}
\begin{tabular}{rccc}
\toprule
&\tblheading{power}&\tblheading{icelake}&\tblheading{grace}\\
\midrule
\emph{architecture}&powerpc64le&amd64&aarch64\\
\emph{CPU}&IBM POWER9&Intel Xeon Gold 6338&ARM Neoverse V2\\
\emph{sockets/cores/threads}&2 / 16 / 4&2 / 32 / 2&2 / 72 / 1\\
\emph{threads total}&128&128&144\\
\emph{clock speed}&$2.9\,\mathrm{GHz}$&$2.0\,\mathrm{GHz}$&$3.4\,\mathrm{GHz}$\\
\emph{L1D/L2 cache per core}&$32\,\mathrm{KiB}$ / $512\,\mathrm{KiB}$&$48\,\mathrm{KiB}$ / $1280\,\mathrm{KiB}$&$64\,\mathrm{KiB}$ / $1024\,\mathrm{KiB}$\\
\emph{L3 cache}&$10\,\mathrm{MiB}$ per 2 cores&$48\,\mathrm{MiB}$ per socket&$114\,\mathrm{MiB}$ per socket\\
\bottomrule
\end{tabular}
\end{centering}
\vspace{\belowdisplayskip}

Two sets of benchmarks were performed: in the \emph{size}~sets
(Fig.~\ref{fig:sizebench}),
all algorithms were measured on arrays with~$2^n$ and $3\times2^{n-1}$ elements
with array sizes from $2$ to as many elements as the memory fits.
In the \emph{threads}~sets (Fig.~\ref{fig:threadbench}), the \emph{parallel}
implementation was measured with an array size of~$128\,\mathrm{GiB}$
(i.\,e.~$2^{34}$~elements) and thread counts from 1 to~128.

20~runs of each benchmark were performed.
In the plots, dots show the individual runs, with a line drawn through the average.
For the size sets, execution was pinned to a single socket to avoid NUMA effects
as much as possible.

\begin{figure}[ht]
  \includegraphics[width=\textwidth]{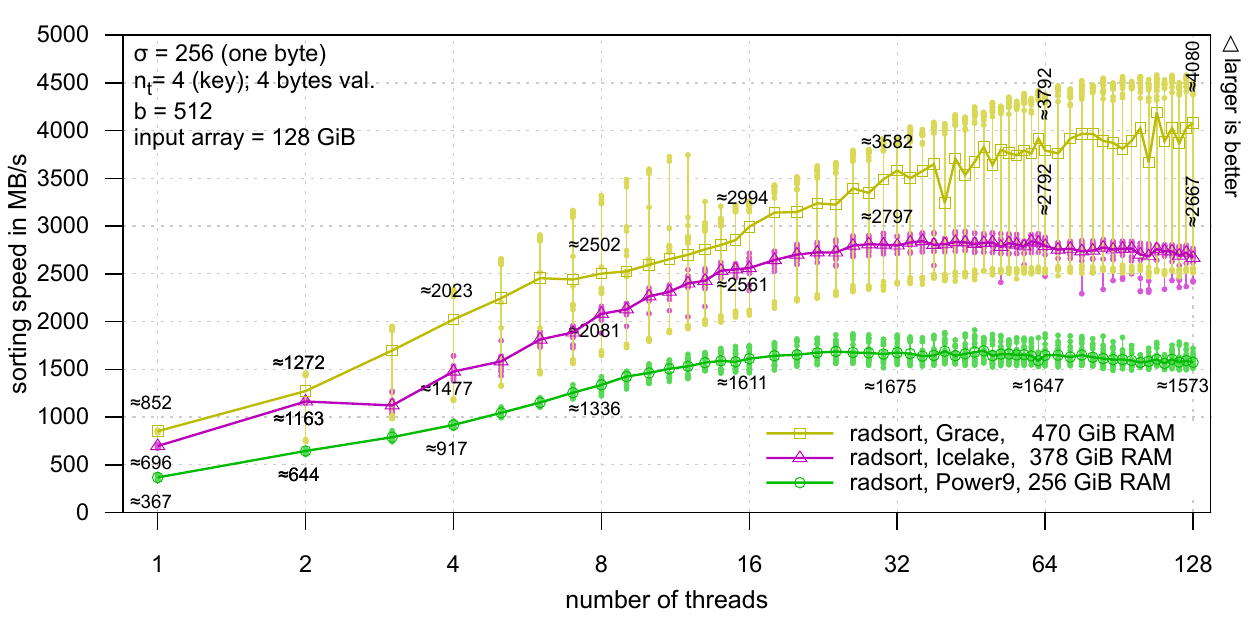}
  \caption{Sorting speed compared on a $128\,\mathrm{GiB}$~array
    with thread count from 1 to~128.}
  \label{fig:threadbench}
\end{figure}

\begin{figure}[htp]
\begin{centering}
  \includegraphics[width=\textwidth]{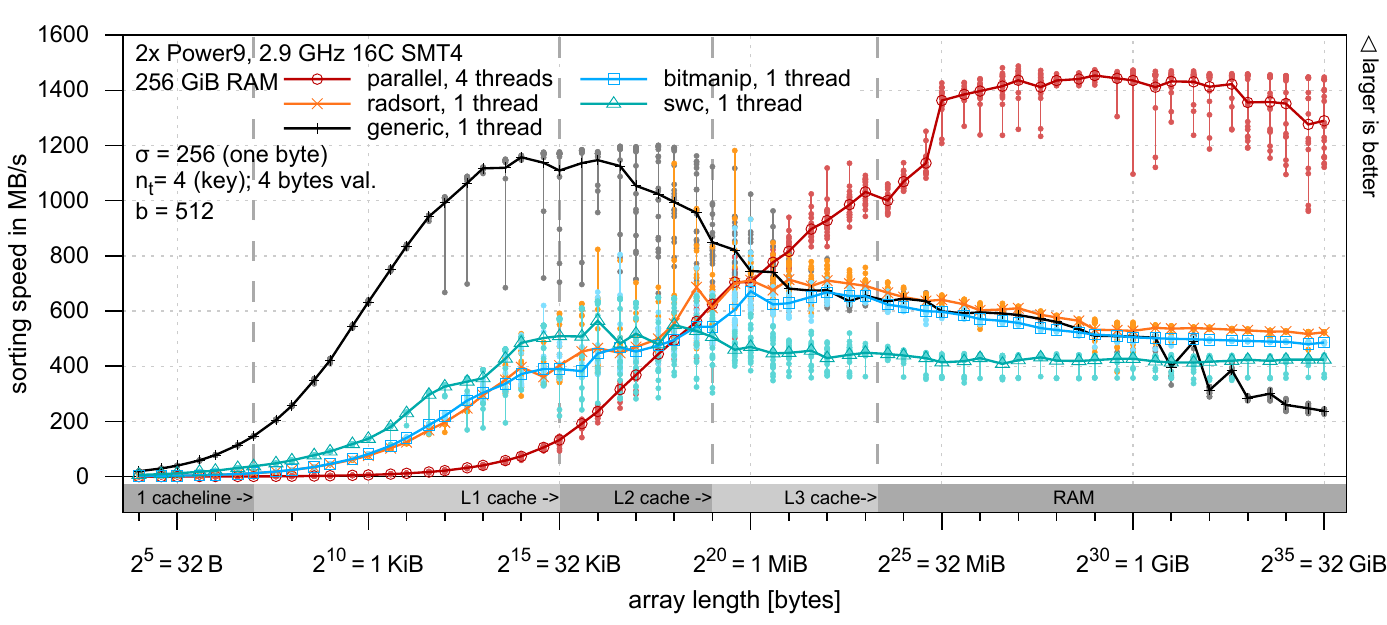}\\
  \includegraphics[width=\textwidth]{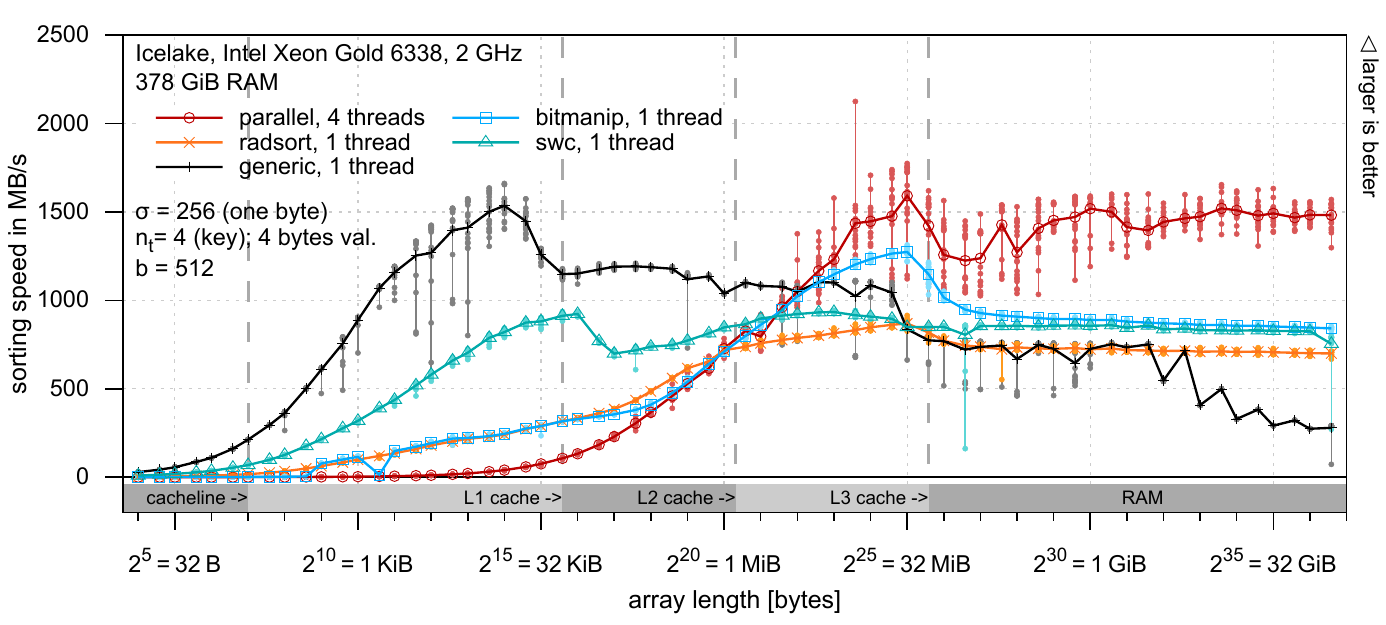}\\
  \includegraphics[width=\textwidth]{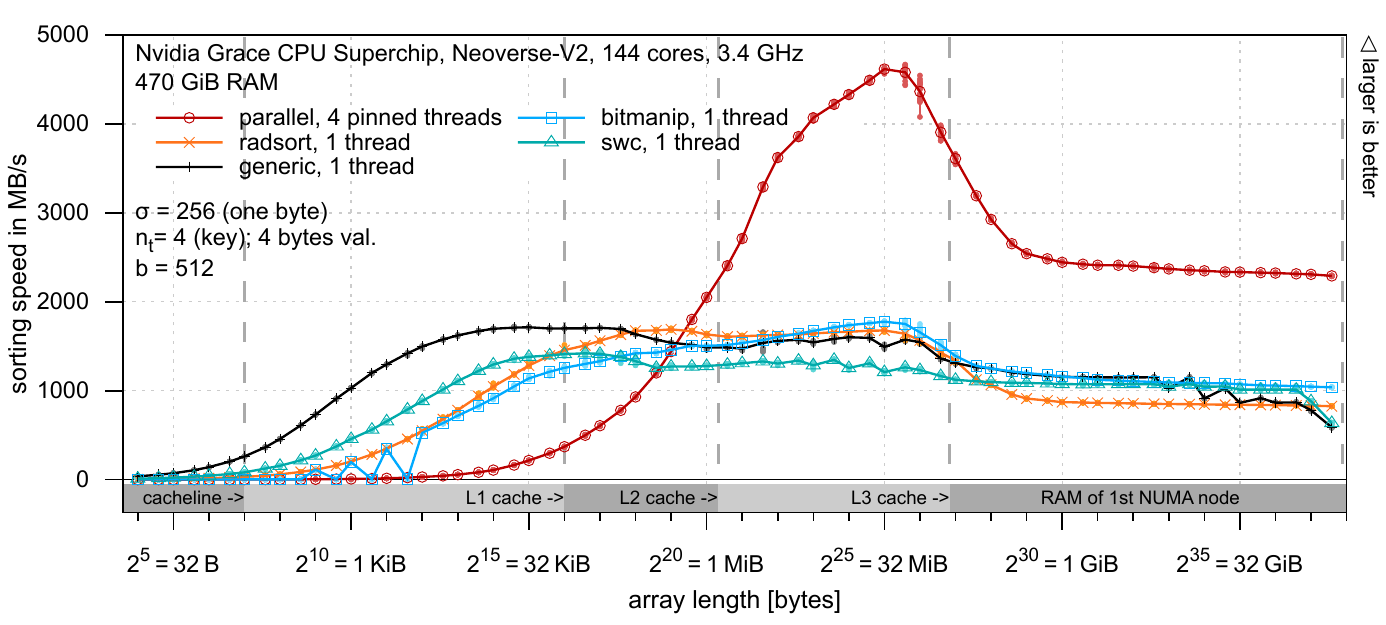}\par
\end{centering}
  \caption{Sorting speeds on three machines with arrays of
  $8\,\mathrm{B}$ to $96\,\mathrm{GiB}$ ($192\,\mathrm{GiB}$ bottom).}
  \label{fig:sizebench}
\vskip-20pt
\end{figure}

\subsection{Results} \label{sec:results}
For small arrays up to around twice the L2~cache size, the \emph{generic}
implementation is the clear winner.  As cache misses are mostly absent at
these array sizes, the overhead of Radsort cannot be outweighed by its
cache-locality benefits.
As input size exceeds this limit, Radsort and its variants quickly outperform
the generic implementation.  Implementing bit-manipulation based end-of-block
checking yields a speedup of~5--50\,\% on all systems tested (including others
not shown here), except for the \emph{power} system where it is slightly slower.

As expected, the \emph{swc}~variant performs very
consistently over the whole range of array lengths, as it minimises cache effects
through manual write combining.  This consistency is maintained even as the array
size approaches a significant chunk of the main memory, while the generic
implementation drops to around half of its performance, matching the results
observed by Wassenberg~et~al.~\cite{Wassenberg2010}.
The \emph{bitmanip} variant exceeds this performance on all
systems for array sizes above the L2~cache size, showing that its timely reuse
of input blocks for output blocks avoid read-to-own transfers just as effectively
as \emph{swc}.

Whenever it is worth using Radsort over an out-of-place radix sort, the parallel
variant outperforms the scalar variants, though the advantages quickly diminish
(cf.~Fig.~\ref{fig:threadbench}), topping out around 24~threads on the
\emph{icelake} machine and around 48~threads on the \emph{power} machine.
NUMA heavily impacts the \emph{grace} machine, causing high measurement
variance based on thread placement. Performance peaks here at around 72~threads.

In summary, it seems advantageous to use \emph{generic} for short
arrays, and to switch to \emph{bitmanip} or \emph{parallel} as input exceeds some
empirically determined threshold.
This threshold could be set to the size of~$T$, allowing for the reuse of~$T$ as
the output array~$A'$ of~Alg.~\ref{alg:ooplsd}.

\section{Related Work}
Radsort is an attempt to adapt the ideas of IPS${}^2$RA~\cite{Axtmann2022}
to LSD radix sort.  While IPS${}^2$RA is designed to use
$\mathcal O(1)$~extra memory, the author found this difficult
to achieve under the LSD~approach's stability requirements.
By raising the overhead to~$\mathcal O(\sqrt n)$, the permutation of blocks can
be explicitly tracked in~$\pi$, addressing these challenges.
As a consequence, blocks only have to actually be moved around during finalisation,
giving a significant speedup.

The result is similar to the ``out-of-cache, in-place, list of blocks'' partitioning
described by Polychroniou et al.~\cite[\S~3.2.3]{Polychroniou2014}, but uses arrays
over linked lists to track~$\pi$, improving cache utilisation, and permitting a
straightforward finalisation procedure.

A major performance limitation of radix sorts is the need for
output buckets to be read into cache so that they can be written to (read for
ownership), effectively halving the write bandwidth and causing cache misses.
While CPUs provide write-combining buffers to avoid this problem, radix sorts
have a fan-out that exceeds their number, rendering them ineffective.
The performance impact can be reduced by prefetching the output buckets
speculatively~\cite{Downs2019}, or more thoroughly by manual software-defined
write-combining~\cite{Wassenberg2010}.

Radsort solves the problem more elegantly: as input blocks are
reused for output blocks after only a short delay, the output blocks are still
hot in cache\footnote{As long as the cache fits up to $\sigma b$~elements
that have been consumed but not yet reused as output blocks.}
when written to, avoiding an extra read for ownership.
Consequently, we have found prefetching to have little to no performance impact
and software write-combining to not be needed.

\section{Future Work}
It is promising to adapt Radsort to the sorting of dissimilarly sized
elements, such as lines of text or JSON records, directly, i.\,e.~without resorting
to sorting an array of pointers to variable-length records.  Likewise, adapting
the general approach to external sorting should be investigated.

During the research for this paper, the authors investigated whether
Radsort's data structures would work for an MSD (i.\,e.~recursive) radix sort
procedure.
Keeping the data structures identical, we quickly run into problems: while each
recursive iteration consumes one partial block from the bucket it recurses over,
it produces up to $\sigma$~new partial blocks, leading to an uncompetitive memory
overhead.
The authors found that this problem can be addressed by compacting the partial
blocks into a sequence of dense blocks after each iteration, tracking the beginning
and end of partial data within this sequence using an extra data structure.
This reduces the extra space for partial blocks to one block per recursion level plus
$2\sigma$~words to track the number of elements in the partial tail of each bucket,
requiring $\mathcal O\bigl((\sigma+b)n_t\bigr)$ extra space.
Due to time constraints, this idea was not explored further.

\section{Conclusion}
In Radsort, we provide a straightforward way to implement LSD radix sorting with
just $\mathcal O(\sqrt n)$~space overhead.  Radsort readily parallelises to a moderate
number of threads.
The performance can be further improved using bit manipulation techniques,
at the cost of a more complex implementation.

As a consequence of more effective cache utilisation,
the performance of Radsort is competitive with a standard out-of-place LSD radix
sort for arrays as small as~$2\,\mathrm{MiB}$, even when using only one thread.
The read-to-own bottleneck on large datasets is avoided without the requirement of
platform-specific techniques like software-defined write-combining.

\bibliography{radsort}
\end{document}